\newcommand{\reservedWordTiles}[1]{\ovalbox{\ensuremath{\mathsf{#1}}\xspace}}
\newcommand{\tiles}[1]{\reservedWordTiles{#1}\xspace}
\newcommand{\Soda}{\textsc{Soda}\xspace}
\newcommand{\Tiles}{\ensuremath{\mathsf{Tiles}}\xspace}
\newcommand{\AR}{\ensuremath{\mathsf{AR}}\xspace}
\newcommand{\Acrocpolis}{ACROCPoLis\xspace}
\newcommand{\Acromagat}{AcROMAgAt\xspace}
\newcommand{\fFair}{\ensuremath{\tilde{f_{\mathsf{F}}}}\xspace}
\newcommand{\fFairTauSeqi}{\ensuremath{\tilde{f}_{\tau(\mathsf{F},O)seqi}}\xspace}
\newcommand{\fPowerSet}{\ensuremath{\mathcal{P}}\xspace}
\newcommand{\fOrderedSeq}[1]{\ensuremath{{#1}_{\prec}}\xspace}
\newcommand{\fPerm}{\ensuremath{\mathrm{Perm}}}
\newcommand{\sAgent}{\ensuremath{\mathsf{A}}\xspace}
\newcommand{\sAgentAttribute}{\ensuremath{\mathsf{A_{at}}}\xspace}
\newcommand{\sResource}{\ensuremath{\mathsf{R}}\xspace}
\newcommand{\sResourceAttribute}{\ensuremath{\mathsf{R_{at}}}\xspace}
\newcommand{\sPSOutcome}{\ensuremath{\fPowerSet(\mathsf{\sAgent \times \sResource})}\xspace}
\newcommand{\tFairScen}{\ensuremath{\mathsf{F}}\xspace}
\newcommand{\sMeasure}{\ensuremath{\mathsf{M}}\xspace}
\newcommand{\sZeroOne}{\ensuremath{\{0, 1\}\xspace}}
\newcommand{\sZeroOneInterval}{\ensuremath{[0, 1]\xspace}}
\newcommand{\sBoolean}{\ensuremath{\mathbb{B}}\xspace}
\newcommand{\sNat}{\ensuremath{\mathbb{N}_{0}}\xspace}
\newcommand{\sRational}{\ensuremath{\mathbb{Q}}\xspace}
\newcommand{\sReal}{\ensuremath{\mathbb{R}}\xspace}
\newcommand{\sInteger}{\ensuremath{\mathbb{Z}}\xspace}
\newcommand{\tilestype}[1]{{\ensuremath {\textsf{#1}}}}
\newcommand{\tilesfun}[1]{{\ensuremath {\textsf{#1}}}}
\newcommand{\orcidID}[1]{{ORCID: #1}}
\newcommand{\email}[1]{\texttt{#1}}
\newtheorem{notation}{Notation}[section]
\newtheorem{concept}{Concept}[section]
\begin{document}

    \begin{frontmatter}
        \title{Specification, Application, and Operationalization of a Metamodel of Fairness}

        \author{Julian~Alfredo~Mendez}\footnote{\orcidID{0000-0002-7383-0529}, \email{julian.mendez@cs.umu.se}}
        \address{Umeå University \\ Sweden}
        \author{Timotheus Kampik}\footnote{\orcidID{0000-0002-6458-2252}, \email{tkampik@cs.umu.se}}
        \address{Umeå University \\ Sweden}

        \begin{abstract}
            This paper presents the \AR fairness metamodel, aimed at formally representing, analyzing, and comparing fairness scenarios. The metamodel provides an abstract representation of fairness, enabling the formal definition of fairness notions. We instantiate the metamodel through several examples, with a particular focus on comparing the notions of equity and equality.
            We use the \Tiles framework, which offers modular components that can be interconnected to represent various definitions of fairness. Its primary objective is to support the operationalization of \AR-based fairness definitions in a range of scenarios, providing a robust method for defining, comparing, and evaluating fairness.
            \Tiles has an open-source implementation for fairness modeling and evaluation.
        \end{abstract}

        \begin{keyword}
            Metamodel of Fairness, Formalization of Fairness, Resource Distribution, Responsible Artificial Intelligence
        \end{keyword}
    \end{frontmatter}

    \section{Introduction}
    \label{sec:intro}

    Fairness is a critical consideration in various domains, including social policy, economics, and technology. Despite its importance, defining and evaluating fairness remains a complex challenge, as assessments of what is fair may vary across contexts and stakeholders. There is no universal definition of fairness, and even seemingly purely technical decisions can have direct fairness implications~\cite{AlerTubella-2022}. Given a specific scenario, a definition of fairness can be addressed by defining a \emph{fairness measure} that measures the fair distribution of resources among agents. Although fairness measures are subjective, they must be well defined in critical contexts. This makes it essential to systematize how these measures are defined, from an abstract subjective understanding to concrete execution, and to support comparing fairness definitions.

    This paper introduces the \AR fairness metamodel, designed to represent and analyze fairness measures and scenarios. The metamodel extends the previous research in~\cite{Mendez.Kampik.Aler.Dignum-2024-SCAI} and serves as a model of models~\cite{Weske-2019}, where each model is an instance of the metamodel.
    The specific instances of these models can then be evaluated to verify whether they comply with a given definition of fairness. The metamodel addresses the challenges of defining and evaluating fairness by offering a structured approach. It provides an abstract representation of \emph{fairness scenarios}, incorporating key elements such as agents, resources, and attributes, which are essential components for evaluating whether a given outcome adheres to a specific definition of fairness.

    We use \Tiles~\cite{Mendez.Kampik.Aler.Dignum-2024-SCAI}, a framework designed to support the \AR fairness metamodel. \Tiles consists of modular blocks, called \emph{tiles}, that can be interconnected to specify a definition of fairness. Each block is annotated to indicate how it can be connected to other blocks within the framework. The combination of the \Tiles framework and the \AR fairness metamodel provides a comprehensive set of tools to model fairness and evaluate fairness in various scenarios, and can be applied to real-world situations.

    This paper is organized as follows. Section~\ref{sec:background} provides an overview of computational models of fairness. Section~\ref{sec:metamodel} introduces the \AR fairness metamodel and its components, including the definition of identifiers, measures, attributes, and auxiliary functions.
    Section~\ref{sec:operationalization} discusses the structure of the blocks and their graphical notation. Section~\ref{sec:discussion} offers a discussion of the \AR fairness metamodel and the \Tiles framework, focusing on their capabilities and limitations. Finally, Section~\ref{sec:conclusion} concludes with reflections and a discussion of future work.

    \section{Background}
    \label{sec:background}

    The importance of fairness in machine learning and artificial intelligence (AI) systems is widely recognized.
    From a modeling perspective, evaluating fairness requires the ability to identify and quantify unwanted bias, which may lead to prejudice and ultimately to discrimination.

    Formalizing fairness can lead to greater transparency in achieving equitable outcomes, which benefits both individuals and the groups they represent.
    Although operationalizing fairness is challenging, efforts to formalize it and automate fairness verification~\cite{Albarghouthi-2017,Albarghouthi-2019} are relevant.
    Several quantifiable definitions have been proposed~\cite{Dwork-2012,Hardt-2016,Joseph-2016,Kearns-2018}, reflecting legal, philosophical, and social perspectives.
    However, different interpretations can inadvertently harm the groups they aim to protect~\cite{CorbettDavies-2018} or do not account for intersectionality~\cite{Kearns-2018}.

    Two widely discussed formalizations are \emph{individual fairness} and \emph{group fairness}.
    Individual fairness requires that similar individuals, based on non-protected attributes, receive similar outcomes.
    Group fairness stipulates that protected groups should receive similar outcomes when non-protected factors are equal~\cite{Chouldechova-2017}.
    These notions can conflict~\cite{Binns-2019-IndividualGroupFairness}.
    For example, if two individuals with similar qualifications receive different outcomes solely because they belong to different protected groups, then group fairness metrics such as equality of odds or equality of opportunity can be used to address the disparity.
    In practice, reconciling these notions and managing the associated value trade-offs remains an active research challenge~\cite{DBLP:conf/innovations/DworkHPRZ12,10.1145/3461702.3462621,AlerTubella-2022}. Model-based methodologies, such as MBFair~\cite{Ramadan-2025-SSM}, enable the verification of software designs with respect to individual fairness.

    Operational tools for fairness assessment include IBM's AI Fairness 360~\cite{DBLP:journals/ibmrd/BellamyDHHHKLMM19} (AIF360), Microsoft's Fairlearn~\cite{bird2020fairlearn}, and Google's What-if Tool~\cite{DBLP:journals/tvcg/WexlerPBWVW20} (WIT).
    AIF360 is a comprehensive and technical open-source Python library that includes fairness metrics and bias mitigation algorithms. It works in the three stages: pre-processing, in-processing, and post-processing.
    Fairlearn includes fairness metrics, bias mitigation algorithms, and also provides fairness dashboards for visual comparisons. WIT is visualization-oriented and provides a dashboard to explore counterfactuals to answer the question ``What if this feature changed?''. However, these tools focus on the operationalization of fairness measures rather than their definition and analysis. To address this limitation, we propose a unified metamodel that supports various perspectives of fairness, building on the frameworks \Acrocpolis ~\cite{AlerTubella-2023} and \Acromagat ~\cite{Mendez.Kampik.Aler.Dignum-2024-SCAI}. We aim to integrate multiple definitions of fairness into a coherent structure, facilitating consistent evaluation and comparison between scenarios.

    \section{Fairness Metamodel}
    \label{sec:metamodel}

    This section presents \AR, a formal metamodel from which fairness definitions can be instantiated.
    Conceptually, \AR focuses on \emph{agents}, \emph{resources}, their attributes as first-class abstractions, and the \emph{outcomes}.
    The presentation of \AR is accompanied by several examples that demonstrate its applicability to the assessment of fairness in specific scenarios, as well as the more abstract comparison of fairness measures.

    \subsection{Basic Elements}
    \label{subsec:basic}

    A fairness scenario provides the building blocks for relating agents, resources, and their attributes. This relation, called \emph{outcome}, is used to evaluate whether it adheres to a defined concept of fairness.
    As a prerequisite, we assume two finite background sets, one of (not further specified) \emph{agents}, denoted by $\cal A$, and one of (not further specified) \emph{resources}, denoted by $\cal R$.
    We assume that the two sets are disjoint, i.e., ${\cal A} \cap {\cal R} = \emptyset$.
    The metamodel is defined as follows.
    \begin{definition}[Fairness Scenario]
        \label{def:metamodel}
        A \emph{fairness scenario} is a tuple $\tFairScen = \langle \sAgent, \sResource, \sAgentAttribute, \sResourceAttribute \rangle$, such that:
        \begin{itemize}
            \item $\sAgent \subseteq \cal A$;
            $\sResource \subseteq \cal R$;
            $\sAgent$ and $\sResource$ are non-empty;
            \item every \emph{agent attribute} $\mathsf{a_{AT}} \in \sAgentAttribute$ is a function that takes an agent as input; every \emph{resource attribute} $\mathsf{r_{AT}} \in \sResourceAttribute$ is a function that takes a resource as input; the codomains of agent attributes and resource attributes may vary and are specified upon instantiation.
        \end{itemize}
    \end{definition}

    Note that in our metamodel, we exclude functions that operate on multiple agents, multiple resources, or combinations of them.

    A consistent definition of quantities is crucial for measuring fairness.
    \begin{notation}[Sets of quantities]
        The set $\sMeasure$ is a placeholder for a set of quantities such as the set of real numbers ($\sReal$), rational numbers ($\sRational$), integers ($\sInteger$), or natural numbers including 0 ($\sNat$), with its operations totally defined on $\sMeasure$.
    \end{notation}

    Relevant attributes are, for example:
    \begin{enumerate}[label=\roman*)]
        \item the utility function $u : \sResource \to \sMeasure$, which returns how much a resource is worth, and
        \item the need function $q : \sAgent \to \sMeasure$, which returns how much of a resource is needed by an agent.
    \end{enumerate}
    Given a fairness scenario, we can define fairness measures. Observe that we denote the \emph{power set} of a set $S$ by $\fPowerSet(S)$.
    \begin{definition}[Fairness Measure]
        \label{def:fairness-measure}
        Let $\tFairScen = \langle \sAgent, \sResource, \sAgentAttribute, \sResourceAttribute \rangle$ be a fairness scenario.
        An \emph{outcome} $O$ is an element $O \in \sPSOutcome$, and when the outcome $O$ is clear from context, we may say that $a$ \emph{receives} $b$ whenever $\langle a, b \rangle \in O$. A \emph{fairness measure} $\fFair$ with respect to a fairness scenario $\tFairScen$ is a function $\fFair : \sPSOutcome \to \sZeroOneInterval$.

        Since $\sZeroOne$ is isomorphic to $\sBoolean$, which is $\{ false, true \}$, we especially consider the case when $\fFair (O)$ returns only 0 or 1. Our interpretation is that 0 corresponds to $false$ and 1 to $true$, i.e. if $\fFair (O) = 1$, the outcome is fair, and if $\fFair (O) = 0$, it is unfair. Generally, we use $\{ false, true \}$ and $\{ 0, 1 \}$ interchangeably.
    \end{definition}
    Let us take a look at how the definition of a fairness measure can be applied.
    \begin{notation}[Notation of Functions by Extension]
        \label{def:functions_by_extension}
        For a function $f : A \to B$, we denote $f$ as a set of pairs $\langle x , y \rangle$ such that $x$ ranges on the elements of $A$ exactly once and $y = f(x)$.
        We also use the Iverson bracket notation, where $ f(x) = [P(x)] $ denotes that $f(x) = 1$ if $P(X)$, and $f(x) = 0 $ otherwise.
    \end{notation}
    The example below illustrates how our fairness metamodel can be instantiated.
    \begin{example}[Fairness Scenario]
        A group of agents $\sAgent$---namely Alice ($A$), Bob ($B$), Carol ($C$), David ($D$), Eve ($E$), and Frank ($F$)---apply for a subsidy.
        Let us assume that there are three types of resources $R_{1}$, $R_{2}$, and $R_{3}$, and their utility $u$ is 10, 20, and 30 respectively.
        The agents needs are encoded in the function $q$, where $A$ and $D$ need 10, $B$ and $E$ need 20, and $C$ and $F$ need 30 each.
        Let us assume that it is \emph{fair} to give everyone at least one of the two best resources. The full instantiation of the fairness scenario and fairness measure is summarized as follows:

        \begin{itemize}
            \renewcommand\labelitemi{}
            \item $\sAgent$ = $\{A, B, C, D, E, F\}$,
            $\sResource$ = $\{ R_{1}, R_{2}, R_{3} \}$,
            \item $\sAgentAttribute$ = $\{ q : \sAgent \to \sNat$,
            $q = \{ \langle A, 10 \rangle, $ $ \langle B, 20 \rangle, $ $ \langle C, 30 \rangle, $ $ \langle D, 10 \rangle, $ $ \langle E, 20 \rangle, $ $ \langle F, 30 \rangle \} \ \}$
            \item $\sResourceAttribute$ = $\{ u : \sResource \to \sNat$, $u = \{ \langle R_{1}, 10 \rangle, $ $ \langle R_{2}, 20 \rangle, $ $ \langle R_{3}, 30 \rangle \} \ \}$,
            \item $\fFair (O)$ = $[ \forall a \in \sAgent \ (a \text{ receives } R_{2} \text{ or } a \text{ receives } R_{3}) ]$
        \end{itemize}

        $\sAgent$ and $\sResource$ contain the agents and resources respectively, $q$ specifies how much each agent needs, and $u$ specifies the utility of each resource.
        An intuition of $\fFair (O)$ is that every agent receives $R_{2}$ or $R_{3}$ (or both).

        Considering two different outcomes:
        $O_{1}= \{ \langle A, R_{3} \rangle,$ $ \langle B, R_{3}\rangle , $ $\langle C, R_{3} \rangle, $ $\langle D, R_{3} \rangle , $ $\langle E, R_{3} \rangle ,$ $\langle F, R_{3} \rangle \}$ and
        $O_{2}= \{ \langle A, R_{3} \rangle, $ $\langle B, R_{2}\rangle ,$ $\langle C, R_{1} \rangle,$ $\langle D, R_{3} \rangle ,$ $\langle E, R_{2} \rangle ,$ $\langle F, R_{1} \rangle \}$,
        we can see that $\fFair(O_{1}) = 1$: $O_{1}$ is fair; in contrast, $\fFair(O_{2}) = 0$: $O_{2}$ is unfair.

    \end{example}

    The notion of fairness applied in the example does not consider the \emph{needs} of the agents. We present fairness measures that consider needs further below.

    \subsection{An Analysis of Equity and Equality}
    \label{subsec:measures}
    Let us demonstrate how the fairness metamodel can be applied to formalize and compare two well-known fairness measures: \emph{equality} and \emph{equity}.
    In the case of equality, every agent receives exactly the same amount of resources.
    \begin{definition}[Equality, Equity, and Strict Equity]
        \label{def:equality-equity}

        Let $\tFairScen = \langle \sAgent, \sResource, \sAgentAttribute, \sResourceAttribute \rangle$ be a fairness scenario, let $u \in \sResourceAttribute$ be a utility function, and $O$ an outcome.
        The \emph{accumulation of received resources} $r_{O} : \sAgent \to \sMeasure$ is:
        \begin{equation}
            r_{O}(a) = \sum _{\langle a , b \rangle \in O} u (b) \ .
        \end{equation}
        This function sums up the utility accumulated by an agent.

        \begin{itemize}
            \item The \emph{equality} fairness measure $\fFair_{eqa}$ is
            \[
                \fFair_{eqa} (O) = [\forall a, a' \in \sAgent \text{ it holds that } r_{O}(a) = r_{O}(a') ].
            \]

            \item If $q \in \sAgentAttribute$ is the need function, the \emph{equity} fairness measure $\fFair_{eqi}$ is
            \[
                \fFair_{eqi} (O) = [\forall a \in \sAgent \text{ it holds that } r_{O}(a) \geq q(a) ].
            \]

            \item The \emph{strict equity} fairness measure $\fFair_{seqi}$ is
            \[
                \fFair_{seqi} (O) = [\forall a \in \sAgent \text{ it holds that } r_{O}(a) = q(a) ].
            \]
        \end{itemize}
    \end{definition}
    Note in Definition~\ref{def:equality-equity}, equity stipulates that each agent receives at least as much as they need.
    One can strengthen the notion of \emph{equity} so that it is violated if an agent receives more than they need.
    We call this alternative notion of equity \emph{strict equity}; it constitutes a special case of equity.

    \begin{proposition}[Strict Equity Implies Equity]
        \label{lem:equity-strict-equity}
        For a fairness scenario $\tFairScen$, for every outcome $O$, the following implication holds:
        \[
            \fFair_{seqi}(O) = 1 \Rightarrow \fFair_{eqi}(O) = 1 .
        \]
    \end{proposition}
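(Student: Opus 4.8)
The plan is to unfold the definitions of the two fairness measures via the Iverson bracket notation (Notation~\ref{def:functions_by_extension}) and to reduce the implication to the elementary fact that equality implies the non-strict inequality on the totally ordered set $\sMeasure$. Throughout, I treat the accumulation function $r_{O}$, the utility function $u \in \sResourceAttribute$, and the need function $q \in \sAgentAttribute$ as fixed by the fairness scenario $\tFairScen$, so that $\fFair_{seqi}$ and $\fFair_{eqi}$ are defined with respect to the same data; this is implicit in the statement ``for a fairness scenario $\tFairScen$'' and I would make it explicit at the start of the proof.

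First I would assume $\fFair_{seqi}(O) = 1$. By Definition~\ref{def:equality-equity} and the Iverson bracket convention, this is equivalent to the predicate inside the bracket being true, i.e., for every $a \in \sAgent$ it holds that $r_{O}(a) = q(a)$.

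Next, fix an arbitrary $a \in \sAgent$. From $r_{O}(a) = q(a)$ I conclude $r_{O}(a) \geq q(a)$, using reflexivity of $\leq$ on $\sMeasure$ (recall that the order and operations on $\sMeasure$ are totally defined). Since $a$ was arbitrary, the predicate $\forall a \in \sAgent \ (r_{O}(a) \geq q(a))$ holds, so by Definition~\ref{def:equality-equity} its Iverson bracket evaluates to $1$, that is, $\fFair_{eqi}(O) = 1$. This establishes the implication.

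The statement is essentially immediate, so there is no substantial obstacle; the only points requiring care are (i) making the dependence on a common $u$ and $q$ explicit, and (ii) translating faithfully between the truth-value presentation (``$\fFair_{\bullet}(O) = 1$'') and the underlying logical statements, consistently with the identification of $\{0,1\}$ with $\{\mathit{false},\mathit{true}\}$ from Definition~\ref{def:fairness-measure}. I would also remark, as the surrounding text already notes, that this proposition just records that strict equity is a special case of equity.
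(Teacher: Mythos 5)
Your proposal is correct and follows essentially the same route as the paper's own proof: assume $\fFair_{seqi}(O)=1$, unfold it to $\forall a \in \sAgent\ (r_{O}(a) = q(a))$, weaken equality to $\geq$, and conclude $\fFair_{eqi}(O)=1$. The additional remarks about reflexivity of the order on $\sMeasure$ and the Iverson-bracket translation are harmless elaborations of steps the paper leaves implicit.
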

    \begin{proof}
        Let $\tFairScen = \langle \sAgent, \sResource, \sAgentAttribute, \sResourceAttribute \rangle$ be a fairness scenario, and let $q : \sAgent \to \sMeasure$ be the need function $q \in \sAgentAttribute$. If $\fFair_{seqi}(O) = 1$, then by definition $\forall a \in \sAgent \ (r_{O}(a) = q(a) ) $.
        Consequently, $\forall a \in \sAgent \ (r_{O}(a) \geq q(a) ) $ must hold as well, and therefore $\fFair_{eqi}(O) = 1$.
    \end{proof}
    More interestingly, we can show that equality can be reduced to strict equity by stipulating that the equally distributed available resources are sufficient to satisfy the needs. In other words, if identical amounts are distributed to every agent and each one requires the same amount, then it is possible to \emph{reduce} a fairness scenario to another one containing a need function to satisfy the needs of all the agents.
    As a prerequisite, we fix background sets of fairness scenarios $\cal F$ and outcomes $\cal O$.
    \begin{proposition}[Equality Reduced to Strict Equity]
        \label{lem:equality-strict-equity}
        For every fairness scenario $\tFairScen$ and outcome $O$, there exists a function $\tau : \cal F \times \cal O \to \cal F$, such that
        \[
            \fFair_{eqa}(O) = 1 \Longleftrightarrow \fFairTauSeqi(O) = 1.
        \]
    \end{proposition}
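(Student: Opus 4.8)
The plan is to construct $\tau(\tFairScen, O)$ by leaving the agents, resources, and resource attributes---in particular the utility function $u$---unchanged, and by making the need function a suitable \emph{constant} map determined by $O$. Write $r_O$ for the accumulation function of Definition~\ref{def:equality-equity}; since $r_O$ depends only on $O$ and $u$, it is the same function in $\tFairScen$ as in any scenario that shares $\sResource$, $u$, and $O$. Put $m := \min \{\, r_O(a) : a \in \sAgent \,\}$, which is well defined because $\sAgent$ is finite and non-empty and $\sMeasure$ is totally ordered, with $m \in \sMeasure$ since the summation is totally defined on $\sMeasure$. I would then let $\tau(\tFairScen, O)$ be the fairness scenario that has the same $\sAgent$, $\sResource$, and $\sResourceAttribute$ as $\tFairScen$, and the same agent attributes except that its need function is the constant map $q'$ with $q'(a) = m$ for every $a \in \sAgent$. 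The same recipe applies to every input pair, so a single $\tau$ works uniformly---even slightly stronger than what is claimed.

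Next I would check the implication $\fFair_{eqa}(O) = 1 \Rightarrow \fFairTauSeqi(O) = 1$. By definition, $\fFair_{eqa}(O) = 1$ says $r_O(a) = r_O(a')$ for all $a, a' \in \sAgent$, so $r_O$ is constant with some value $c$; since $\sAgent \neq \emptyset$ the set $\{\, r_O(a) : a \in \sAgent \,\}$ equals $\{ c \}$, hence $m = c$. Therefore $r_O(a) = c = m = q'(a)$ for every $a \in \sAgent$. Because $\tau(\tFairScen, O)$ retains $u$ and $O$, its accumulation function is this same $r_O$, so the strict-equity condition $\forall a \in \sAgent : r_O(a) = q'(a)$ holds there, i.e. $\fFairTauSeqi(O) = 1$.

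For the converse I would argue directly: if $\fFairTauSeqi(O) = 1$ then, unfolding strict equity in $\tau(\tFairScen, O)$, $r_O(a) = q'(a) = m$ for every $a \in \sAgent$. As $m$ does not depend on $a$, this yields $r_O(a) = m = r_O(a')$ for all $a, a' \in \sAgent$, which is exactly $\fFair_{eqa}(O) = 1$.

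I expect the real work to be bookkeeping rather than mathematics. Two points need care: (i) $\tau$ must be a genuine total, deterministic function on ${\cal F} \times {\cal O}$, which is why $q'$ is defined via the canonical value $\min \{\, r_O(a) : a \in \sAgent \,\}$ rather than by picking ``some'' agent's accumulated utility---there is no assumed order on $\cal A$; and (ii) one must be explicit that $r_O$, and hence the truth value of the strict-equity test, is evaluated \emph{inside} $\tau(\tFairScen, O)$, which is sound precisely because $\tau$ preserves $\sResource$, the utility function $u$, and the outcome $O$. Granting these, both implications are one-line unfoldings of Definition~\ref{def:equality-equity}.
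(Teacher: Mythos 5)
Your proof is correct and takes essentially the same route as the paper's: both reduce equality to strict equity by augmenting the scenario with a constant need function whose value is read off from $r_{O}$, and both directions are then one-line unfoldings of Definition~\ref{def:equality-equity}. The only difference is that the paper sets the constant to $r_{O}(a_{0})$ for an arbitrarily chosen $a_{0} \in \sAgent$, whereas you use $\min \{ r_{O}(a) \mid a \in \sAgent \}$ to make $\tau$ canonically well-defined---a minor but legitimate refinement, not a different argument.
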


    \begin{proof}
        Let $\tFairScen = \langle \sAgent, \sResource, \sAgentAttribute, \sResourceAttribute \rangle$ be a fairness scenario and $O$ an outcome. Choose an arbitrary element $a_{0} \in \sAgent$, which is well-defined because $\sAgent$ is non-empty and define the function $q : \sAgent \to \sMeasure$ as follows: $q(a) := r_{O}(a_{0})$ (i.e., $q$ is fixed, independently of input $a$).
        Also, let $\tau(\tFairScen, O) = \langle \sAgent, \sAgentAttribute \cup \{q\} , \sResource, \sResourceAttribute \rangle$.

        ($\Rightarrow$) Assume that $\fFair_{eqa}(O) = 1$ and choose $a, a'  \in \sAgent$. Then, by definition it holds that $r_{O}(a) = r_{O}(a')$, specifically $r_{O}(a) = r_{O}(a_{0})$. By definition of $q$, it holds that $q(a) = r_{O}(a_{0})$. Since $a$ is arbitrary, we have $\forall a \in \sAgent$, $r_{O}(a) = q(a)$ and therefore $\fFairTauSeqi(O) = 1$.

        ($\Leftarrow$) Assume that $\fFairTauSeqi(O) = 1$ and choose $a \in \sAgent$. Then, by definition it holds that $r_{O}(a) = q(a) $. By definition of $q$, it holds that $q(a) = r_{O}(a_{0})$. Since $a$ is arbitrary, $\forall a, a' \in \sAgent$ we have $r_{O}(a) = r_{O}(a')$, and therefore $\fFair_{eqa}(O) = 1$.
    \end{proof}

    All agents need what an arbitrary agent receives. Only if all agents receive the same, then their needs are strictly met.

    \subsection{Preferences}
    \label{subsec:preferences}

    In the examples above, we use quantitative functions to evaluate fairness in the distribution of resources among agents. However, we can consider qualitative functions for that purpose as well. In fact, attributes can be used to determine preferences and thus define fairness measures. Note that qualitative functions can also be used to group agents into categories.

    \begin{definition}[Ordinal Preference Function]
        An \emph{ordinal preference function} $v$ is a function $v : \sAgent \to \fPerm(\sResource)$, where $\fPerm(\sResource) = \{ (b_{1}, b_{2}, \ldots, b_{n}) \mid \{b_{1}, b_{2}, \ldots, b_{n}\} = \sResource\}$, such that for each agent, it defines a ranking (a strict total preference order) over all resources, ordered from the most preferred to the least preferred.
        The notation $b_{1} \succ_{a} b_{2}$ indicates that $b_{1}$ precedes $b_{2}$ in $v(a)$, and is read as ``$a$ prefers $b_{1}$ over $b_{2}$''.
        Note that this particular modeling does not allow for \emph{ties}, i.e., situations in which an agent is indifferent between two resources. To accommodate ties, the definition of ranking can be relaxed to a weak order, where every pair is comparable but some pairs may be considered equally good.
    \end{definition}

    \begin{example}[Ordinal Preferences]
        \label{example:preferences}
        Assume that $\sAgent$ is a set of agents $\sAgent = \{A, B, C, D, E \}$, and each agent can receive a jacket that is \emph{large} ($L$) or \emph{small} ($S$), i.e., $\sResource = \{ L, S\}$. We allow each agent to choose different jackets in some order of preference. To model each agent's preference, we define the attribute $v : \sAgent \to \fPerm(\sResource)$,
        $v = \{ \langle A, (S, L) \rangle,$ $ \langle B, (S, L) \rangle,$ $
        \langle C, (L, S) \rangle,$ $ \langle D, (L, S) \rangle,$ $ \langle E, (L, S) \rangle \}$.

        In this case, B would be \emph{satisfied} with a large jacket but would \emph{prefer} a small one, C prefers a large jacket but would accept a small one. The fairness measure can be that every agent receives at least one of the resources in the preferences, which can be written as
        \[
            \fFair (O) = [\forall a \in \sAgent , \exists b \in \sResource \text{ s.t. } a \text{ receives } b].
        \]

        The agents can be satisfied with the following outcome:
        $O = \{ \langle A, S \rangle, $ $\langle B, L \rangle, $ $ \langle C, S \rangle, $ $ \langle D, L \rangle, $ $ \langle E, L \rangle \} $,
        but this does not prevent that C envies B's jacket and that B envies C's jacket. If they exchange their jackets:
        $O = \{ \langle A, S \rangle, $ $\langle B, S \rangle, $ $ \langle C, L \rangle, $ $ \langle D, L \rangle, $ $\langle E, L \rangle \}$,
        no agent envies another agent's jacket. The concept in which no agent envies the outcome of another agent is called \emph{envy-freeness}~\cite{Foley-1967-ResourceAllocation,Amanatidis-2018-ComparingEnvyFreeness,Richter.Rubinstein-2020-EconomicTheory,Li-2024-PriceEnvyFreeness}. We write a new fairness measure $\fFair (O)$ considering what we call a \emph{weak} envy-freeness, where each agent receives at least one resource that it most preferred than any other resource received by any other agent.

        \begin{itemize}
            \renewcommand\labelitemi{}
            \item $\sAgent$ = $\{A, B, C, D, E\}$,
            $\sResource$ = $\{ L, S \}$,
            \item $\sAgentAttribute = \{ v : \sAgent \to \fPerm(\sResource)$,
            $v = \{ \langle A, (S, L) \rangle, $ $ \langle B, (S, L) \rangle, $ $ \langle C, (L, S) \rangle,$
            $ \langle D, (L, S) \rangle, $ $ \langle E, (L, S) \rangle \} $,
            $\sResourceAttribute$ = $\emptyset$,
            \item $\fFair (O)$ = [ $\nexists a, a' \in \sAgent, b' \in \sResource$, s.t. $a'$ receives $b'$, $a$ does not receive $b'$,
            \ \ and $\forall b \in \sResource$ s.t. $a$ receives $b$ it holds that $b' \succ_{a} b$ ].
        \end{itemize}

        Here, $v$ represents the resource preference of each agent and no resource attribute is needed.
    \end{example}
    The change in outcomes in Example~\ref{example:preferences} is a \emph{Pareto improvement}, because at least one agent is better off without leaving anyone else worse off. An outcome is \emph{Pareto optimal} when no Pareto improvement can be applied.

    \subsection{Group and Individual Fairness}
    \label{subsec:group-and-individual}

    \emph{Group fairness} ensures that different demographic groups, which may have protected attributes, such as race and gender, receive similar outcomes. It focuses on statistical parity across groups.
    \emph{Individual fairness} ensures that similar individuals receive similar outcomes. It emphasizes consistency in treatment based on relevant features, regardless of group membership. Group fairness and individual fairness may seem to be in conflict~\cite{Binns-2020-ConflictIndividualGroupFairness}, but we can think of group and individual fairness as complementary.
    Let us assume that an attribute can be considered either \emph{relevant} or \emph{irrelevant} to determine the distribution of a resource. According to group fairness, if an attribute $p$ is irrelevant, the groups with attribute $p$ should receive the same amount as the groups without attribute $p$. According to individual fairness, if an attribute $q$ is relevant, the individuals with a similar value of attribute $q$ should be treated similarly.

    We illustrate this with the following example.

    \begin{example}[Group and Individual Fairness]
        Assume that a group of agents $\sAgent = \{A, B, C, D, E, F \}$ apply for a loan ($L$), and that there is a protected demographic attribute $p$, which should be irrelevant for the loan application. Assuming that only $D$, $E$, and $F$ have that attribute creates two demographic groups: $G _{\lnot p} = \{ A, B, C\}$ and $G _{p} = \{D, E, F \}$.

        Following group fairness, those belonging to different demographic groups should be treated similarly, regardless of the group to which they belong. Assuming that 2 out of 3 loan applications are accepted, that relation should appear in $G_{\lnot p}$ and also in $G_{p}$. For example, if the applications of $A$, $B$, $E$, and $F$ are accepted and the rest ($C$ and $D$) rejected, group fairness holds in this case.

        Following individual fairness, if two applicants have nearly identical values for a critically relevant attribute, they should receive similar treatment. Assume that $q$ is an essential attribute to determine whether to give a loan, and that only $B$, $C$, $E$, and $F$ have this attribute. If $D$ gets the loan and $F$ does not, individual fairness is not observed in this outcome.

        We formalize this example as follows:

        \begin{itemize}
            \renewcommand\labelitemi{}
            \item $\sAgent$ = $\{A, B, C, D, E, F \}$,
            $\sResource$ = $\{ L \}$,
            \item $\sAgentAttribute$ = $ \{ p : \sAgent \to \sBoolean , p = \{ \langle A , false \rangle,$ $ \langle B , false \rangle,$ $ \langle C , false \rangle, $ $\langle D , true \rangle,$ $ \langle E , true \rangle, $ $\langle F , true \rangle \}, $ \ \ $q : \sAgent \to \sBoolean $, $q = \{ \langle A , false \rangle, $ $\langle B , true \rangle, $ $\langle C , true \rangle, $ $\langle D , false \rangle,$ $ \langle E , true \rangle, $ $\langle F , true \rangle \} \ \}$,
            $\sResourceAttribute $ = $\emptyset$, \\
            $\varepsilon = 10^{-2},$
            $\cdot \simeq_{\varepsilon} \cdot : \sMeasure \times \sMeasure \to \sBoolean$, \\
            $a \simeq_{\varepsilon} b =
            \begin{cases}
                true, & \text{if } a = b \text{ or } (a \neq b \text{ and } \dfrac{|a - b|}{\max(|a|, |b|)} < \varepsilon) \\
                false, & \text{otherwise}
            \end{cases}$
            \item $r_{p^{+}} =  \dfrac{| \{ a \in \sAgent \mid p(a) \text { and } a \text{ receives } L \} |}{ | \{a \in \sAgent \mid p(a) \} | } $,
            \item $r_{p^{-}} =  \dfrac{| \{ a \in \sAgent \mid \lnot p(a) \text { and } a \text{ receives } L \} |}{ | \{a \in \sAgent \mid \lnot p(a) \} | } $,
            \item $G \fFair (O)$ = $[r_{p^{+}} \simeq_{\varepsilon} r_{p^{-}} ]$ \ ,
            \item $I \fFair (O)$ = {\tiny{1:}} $[  \forall a, a' \in \sAgent , a \neq a' \text{ and } q(a) = q(a') \Rightarrow $ \\
            {\tiny{2:}} \ \ \ \ $( a \text{ receives } L \text { and } a' \text{ receives } L ) \text { or }$ \\
            {\tiny{3:}} \ \ \ \ $( a \text{ does not receive } L \text { and } a' \text{ does not receive } L ) ] $ \ .
        \end{itemize}

        Here, $p$ is an irrelevant protected attribute for group fairness, $q$ is an essential attribute for individual fairness, $\cdot \simeq_{\varepsilon} \cdot $ determines whether two quantities are similar up to a value $\varepsilon$, $r_{p^{+}}$ and $r_{p^{-}}$ are the ratios between the number of agents
        with/without the protected attribute $p$ that receive the loan compared to all that have/do not have $p$ respectively.

        We provide two different fairness measures: $G \fFair (O)$ for group fairness and $I \fFair (O)$ for individual fairness.
        Group fairness compares the relation between the number of agents with attribute $p$ that receive the loan compared to all the agents having $p$ is similar to the relation between those who receive the loan for the other group.
        Individual fairness requires that:
        \begin{enumerate}
            \item [1.] for every two different agents with the same value of the essential attribute $q$,
            \item [2.] either both receive the loan,
            \item [3.] or neither receives the loan.
        \end{enumerate}
    \end{example}

    \subsection{Continuous Fairness Measures}
    \label{subsec:continuous-fairness-measures}

    In Definition~\ref{def:fairness-measure}, we define $\fFair$ to range in the continuous interval $\sZeroOneInterval$, but we only have presented discrete examples up to this point. However, we can model a continuous example such as Jain's fairness index~\cite{Jain-1984-QuantitativeMeasureOfFairness}, which is a quantitative measure for assessing how evenly a resource is allocated among $n$ agents.

    \begin{definition}[Jain's Fairness Index]
        \label{def:jain-index}
        Given a set of agents indexed from 1 to $n$, such that each agent receives $x_{1}, x_{2},\ldots ,x_{n}$ respectively, the index is defined (on the left) and rewritten (on the right) as:
        \begin{equation}
            \label{eq:jain-equation}
            \begin{tabular}{cc}
                $
                J(x_{1}, \ldots , x_{n}) = \frac{\bigl(\sum \limits _{i=1}^{n} x_{i} \bigr)^2}{n \cdot \sum \limits_{i=1}^{n} x_{i}^{2}} \ ,

                $ &
                $
                \fFair(O) = \frac{\bigl(\sum \limits _{a \in \sAgent} r_{O}(a) \bigr)^2}{|\sAgent| \cdot \sum \limits_{a \in \sAgent} r_{O}(a)^{2}} \
                $. \\
            \end{tabular}
        \end{equation}

    \end{definition}

    Jain's fairness index is used to measure fairness in network resource allocation, to evaluate load balancing schemes in distributed systems, and to balance throughput in congestion control protocols. In particular, the index provides a single scalar score that can be used to compare different allocation strategies and to tune parameters for the desired level of fairness.

    In the following example, we apply this fairness measure.

    \begin{example}[Continuous Fairness Measure]

        Let us assume that agents need to access resources that represent different bandwidth values in a computer network.

        \begin{itemize}
            \renewcommand\labelitemi{}
            \item $\sAgent$ = $\{A, B, C, D \}$,
            $\sResource$ = $\{ M_{0}, M_{10} , M_{20} , M_{50} \}$,
            $\sAgentAttribute $ = $\emptyset$,
            \item $\sResourceAttribute$ = $ \{ u : \sResource \to \sMeasure , u = \{ \langle M_{0} , 0 \rangle, $ $ \langle M_{10} , 10 \rangle, $ $ \langle M_{20} , 20 \rangle, $ $ \langle M_{50} , 50 \rangle \} \ \}, $
            \item $\fFair(O)$ = as in Equation~(\ref{eq:jain-equation})
        \end{itemize}

        In this case, $u$ represents the utility in megabits per second (Mbps) of each resource.

        Considering the following outcomes:
        $O_{1} = \{ \langle A, M_{20} \rangle, $ $ \langle B, M_{20} \rangle, $ $ \langle C, M_{20} \rangle,$ $ \langle D, M_{20} \rangle \}$,
        $O_{2} = \{ \langle A, M_{20} \rangle, $ $ \langle B, M_{20} \rangle, $ $\langle C, M_{20} \rangle,$ $ \langle D, M_{0} \rangle \} $, and
        $O_{3} = \{ \langle A, M_{0} \rangle, $ $\langle B, M_{0} \rangle, $ $\langle C, M_{0} \rangle, $ $\langle D, M_{10} \rangle \}$,
        we obtain that
        $\fFair(O_{1}) = 1$, $\fFair(O_{2}) = 0.75$, and $\fFair(O_{3}) = 0.25$. We could interpret in words that $O_{1}$ is perfectly fair, $O_{2}$ is moderately fair, and $O_{3}$ is clearly unfair, but in practice the nuances of how fair these values are strongly dependent on context.
    \end{example}
    A continuous fairness measure that is more closely aligned with \emph{social} (in contrast to \emph{technical}) applications is the Gini index, which we cover in the example below.
    \begin{example}[Complement of the Gini Index]
        The \emph{Gini index} is a statistical measure of inequality often used to quantify the inequality of income or wealth within a population.
        The index has also been applied in other contexts, such as decision tree algorithms in machine learning.
        In the interpretation of the Gini index, the closer the Gini index is to 0, the more equal the distribution, and the closer to 1, the more unequal. This works exactly opposite to Jain's index and our definition for a fairness measure. Because of this, we define the \emph{complement of the Gini index} by inverting the output of the Gini index. The Gini index is defined (on the left) and rewritten (on the right) as:

        \begin{equation}
            \label{def:gini-equation}
            \begin{tabular}{cc}
                $ G = \displaystyle\frac{\displaystyle\sum_{i=1}^{n} \displaystyle\sum_{j=1}^{n} |x_{i} - x_{j}|}{2 \cdot n \cdot \displaystyle\sum_{i=1}^{n} x_{i}} \ ,
                $ &
                $\fFair (O) = 1 - \displaystyle\frac{\displaystyle\sum_{a_{1} \in \sAgent} \displaystyle\sum_{a_{2} \in \sAgent} |r_{O}(a_{1}) - r_{O}(a_{2}) |}{2 \cdot |\sAgent| \cdot \displaystyle\sum_{a \in \sAgent} r_{O}(a) }$ \\
            \end{tabular}
        \end{equation}

        where $n$ is the number of agents (or households) and
        $x_{i}$ is the income (or wealth) of agent $i$.

        As in the case of Jain's index in Definition~\ref{def:jain-index}, we can plug in the agents of set $\sAgent$, such that $n$ is $|\sAgent|$, and $r_{O}(a)$ is the amount received by each agent $a$.
        The agents are the households, and the resource is the wealth to be distributed.

        \begin{itemize}
            \renewcommand\labelitemi{}
            \item $\sAgent$ = $\{A, B, C, D, E, F \}$, $\sResource$ = $\{ R_{5}, R_{10}, R_{15}, R_{20}, R_{50}, R_{100} \}$,
            \item $\sAgentAttribute $ = $\emptyset$, $\sResourceAttribute$ = $ \{ u : \sResource \to \sMeasure , u = \{ \langle R_{5} , 5 \rangle, \langle R_{10} , 10 \rangle, \langle R_{15} , 15 \rangle, \langle R_{20} , 20 \rangle, $
            \ \ $\langle R_{50} , 50 \rangle, \langle R_{100} , 100 \rangle \} \ \}, $
            \item $\fFair(O)$ = as in Equation~(\ref{def:gini-equation})

        \end{itemize}

        Here, $u$ is the utility (for example, in thousands of euros) of each resource.

        Considering the following outcomes:
        $O_{1} = \{ \langle A, R_{20} \rangle, $ $ \langle B, R_{20} \rangle, $ $\langle C, R_{20} \rangle, $ $\langle D, R_{20} \rangle, $ $\langle E, R_{20} \rangle, $ $\langle F, R_{20} \rangle  \}$,
        $O_{2} = \{ \langle A, R_{5} \rangle, $ $\langle B, R_{10} \rangle, $ $\langle C, R_{15} \rangle, $ $\langle D, R_{20} \rangle, $ $\langle E, R_{50} \rangle, $ $\langle F, R_{100} \rangle  \}$,
        $O_{3} = \{ \langle A, R_{5} \rangle, $ $\langle B, R_{10} \rangle \}$,
        we obtain that
        $\fFair(O_{1}) = 1 - \frac{0}{1440} = 1$, $\fFair(O_{2}) = 1 - \frac{1200}{2400} = 0.5$, and $\fFair(O_{3}) = 1 - \frac{130}{180} \approx 0.28$. We could interpret in words that $O_{1}$ is perfectly fair, $O_{2}$ is rather unfair, and $O_{3}$ is clearly unfair.
    \end{example}

    \begin{example}[Fairness Measure for Equalized Odds]
        The COMPAS (Correctional Offender Management Profiling for Alternative Sanctions)
        system~\cite{ProPublica-2016}
        is a well-known example of a risk assessment tool used in the U.S. criminal justice system to predict the likelihood of recidivism (re-offending after a prior arrest).
        Studies, most notably by ProPublica in 2016\footnote{\url{https://github.com/propublica/compas-analysis}},
        found that COMPAS exhibited racial bias: it was more likely to falsely flag Black defendants as future criminals (higher false positive rate) and more likely to misclassify White defendants as low risk (higher false negative rate).

        Based on that example, we designed a fairness measure to detect bias in a prediction system. The system gives two possible scores (resources) with their analogous score in COMPAS:
        \begin{itemize}
            \item $R_{\mathrm{low}}$ : Low Risk (scores 1--4)
            \item $R_{\mathrm{high}}$ : Medium Risk (scores 5--7) and High Risk (scores 8--10)
        \end{itemize}
        The prediction system is defined as follows.

        \begin{itemize}
            \renewcommand\labelitemi{}
            \item $\sAgent$ = $\{A, B, C, D, E, F \}$,
            $\sResource$ = $\{ R_{\mathrm{low}}, R_{\mathrm{high}} \}$,
            \item $\sAgentAttribute$ = $ \{ p : \sAgent \to \sBoolean , \ p = \{ \langle A , false \rangle, $ $ \langle B , false \rangle, $ $ \langle C , false \rangle, $  $\langle D , true \rangle, $ $ \langle E , true \rangle, $ $ \langle F , true \rangle \}, $
            $res : \sAgent \to \sResource , \ res = \{\langle A , R_{\mathrm{low}} \rangle, $ $ \langle B , R_{\mathrm{high}} \rangle, $ $ \langle C , R_{\mathrm{high}} \rangle, $
            $\langle D , R_{\mathrm{low}} \rangle, $ $ \langle E , R_{\mathrm{low}} \rangle, $ $ \langle F , R_{\mathrm{high}} \rangle \}, $ \\
            $\sResourceAttribute$ = $ \{ u : \sResource \to \sMeasure , u = \{ \langle R_{\mathrm{low}} , 0 \rangle, \langle R_{\mathrm{high}} , 1 \rangle \} \ \}, $ \\
            $ \overline{(x_{i})_{i=1}^{n}} \text{ is the arithmetic mean of the sequence } (x_{1}, x_{2}, \ldots , x_{n}) $                                                                                                      \\
            $corr : \displaystyle\bigcup_{n=1}^{\infty} (\sMeasure \times \sMeasure) \to [-1, 1],$      \\
            $
            corr ((x_{i})_{i=1}^{n}, (y_{i})_{i=1}^{n}) = \frac{\displaystyle\sum_{k=1}^{n} (x_{k} - \overline{(x_{i})_{i=1}^{n}}) \, (y_{k} - \overline{(y_{i})_{i=1}^{n}})}
            {\sqrt{\displaystyle\sum_{k=1}^{n} (x_{k} - \overline{(x_{i})_{i=1}^{n}})^2} \;
            \sqrt{\displaystyle\sum_{k=1}^{n} (y_{k} - \overline{(y_{i})_{i=1}^{n}})^2}}
            $

            \item $\fFair(O)$ =    $| corr(\ (\,[\langle x_{i}, R_{\mathrm{high}} \rangle \in O \,\land\, res(x_{i}) \neq R_{\mathrm{high}}]\,)_{i=1}^{n}, (\,[p(x_i)]\,)_{i=1}^{n} \ ) |$
        \end{itemize}

        Here, $p$ is the protected attribute, such as skin color in COMPAS, $res$ is the ground truth based on facts, and $u$ is the associated risk of each score. We use the Pearson correlation coefficient, but another correlation coefficient can also be used. Recall that the square brackets $[$ $]$ follow the Iverson notation and that the fairness measure should only return values in $[0, 1]$.
    \end{example}

    \section{Operationalization}
    \label{sec:operationalization}

    In this section, we show how we operationalize the concepts presented above, that is, how we translate abstract ideas into concrete, formal representations that can be implemented and measured.
    Essentially, an instance of the \AR metamodel provides a formal representation of a fairness scenario
    $\tFairScen = \langle \sAgent, \sResource, \sAgentAttribute, \sResourceAttribute \rangle$. Assuming that the agent and resource attributes are given, the fairness measure still needs to be defined.
    As we can see from the examples above, defining a fairness measure can be both complex and error-prone.
    To address this, we use the \Tiles framework~\cite{Mendez.Kampik.Aler.Dignum-2024-SCAI} as a way of defining the fairness measure for a given fairness scenario. The purpose of this framework is to improve readability and reliability of the fairness measures. To achieve this, we decompose the fairness measure into smaller pieces of composable blocks.

    \subsection{Design of the Blocks}
    \label{subsec:design}

    The design of blocks in \Tiles is based on the concept of a \emph{module} in software engineering.
    Recall that a fairness scenario is a tuple $\tFairScen = \langle \sAgent, \sResource, \sAgentAttribute, \sResourceAttribute \rangle$. We assume that $\sResourceAttribute$ contains a function $u : \sResource \to \sMeasure$ \emph{(utility)}, $\sAgentAttribute$ contains a function $q : \sAgent \to \sMeasure$ \emph{(needs)}, and we define the auxiliary function $r_{O} : \sAgent \to \sMeasure$ \emph{(accumulates)} as in Definition~\ref{def:equality-equity}, which depends on $u$.

    We represent the agents $\sAgent$ as a sorted sequence of identifiers $\fOrderedSeq{A}$, denoted by \tilesfun{all-agent}. We define \tilesfun{accumulates} and \tilesfun{needs} by applying $r_{O}$ and $q$, respectively, to each element of the sequence. Additionally, we define \tilesfun{all-equal} as the result of checking whether all the elements in the sequence are equal, and \tilesfun{all-at-least} as the result of checking, for each pair in the sequence, if its first component is greater than or equal to the second component.

    We define equality with the following pipeline:

    \begin{equation}
        \fFair_{eqa} (O) = \tilesfun{all-equal}(\tilesfun{accumulates}(\tilesfun{all-agent}))
        \label{eq:equality-pipeline}
    \end{equation}

    $\fFair_{eqa}$ in Definition~\ref{def:equality-equity} is equivalent to $\fFair_{eqa}$ given in Equation~\ref{eq:equality-pipeline}.

    \begin{proposition}[Correctness of the Equality Pipeline]
        Let $\tFairScen = \langle \sAgent, \sResource, \sAgentAttribute, \sResourceAttribute \rangle$ be a fairness scenario, $\fFair_{eqa}$ defined as in Definition~\ref{def:equality-equity}, and $\fFair_{eqa'}$ defined as in $(\ref{eq:equality-pipeline})$. Then, for every outcome $O$,
        \[
            \fFair_{eqa} (O) = 1 \Longleftrightarrow \fFair_{eqa'} (O) = 1 \ .
        \]
    \end{proposition}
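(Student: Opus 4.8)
The plan is to unfold the definitions of the three tiles that make up the pipeline and check that their composition computes exactly the predicate of Definition~\ref{def:equality-equity}. Since both $\fFair_{eqa}$ and $\fFair_{eqa'}$ return only $0$ or $1$, it suffices to establish the stated biconditional; in fact I would prove the slightly stronger claim that the two functions agree pointwise on every outcome $O$.

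First I would fix an outcome $O$ and write $\tilesfun{all-agent} = \fOrderedSeq{A} = (a_{1}, \ldots, a_{n})$ for the sorted sequence of agent identifiers. The structural fact to record is that this sequence is a repetition-free enumeration of $\sAgent$: $\{a_{1}, \ldots, a_{n}\} = \sAgent$, the $a_{i}$ are pairwise distinct, and $n = |\sAgent| \ge 1$ because $\sAgent$ is non-empty. By the definition of \tilesfun{accumulates} we then have $\tilesfun{accumulates}(\tilesfun{all-agent}) = (r_{O}(a_{1}), \ldots, r_{O}(a_{n}))$, and by the definition of \tilesfun{all-equal} the pipeline returns $1$ on this sequence exactly when all its entries coincide, i.e.\ when $r_{O}(a_{i}) = r_{O}(a_{j})$ for all $i, j \in \{1, \ldots, n\}$ (if \tilesfun{all-equal} is implemented by comparing consecutive entries only, transitivity of equality on $\sMeasure$ upgrades this to all pairs). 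Hence $\fFair_{eqa'}(O) = 1$ iff $r_{O}(a_{i}) = r_{O}(a_{j})$ for all indices $i, j$.

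It then remains to match this index-wise condition with the set-wise condition $\forall a, a' \in \sAgent,\ r_{O}(a) = r_{O}(a')$ from Definition~\ref{def:equality-equity}. For the direction ($\Leftarrow$), given indices $i, j$ the entries $a_{i}, a_{j}$ belong to $\sAgent$, so instantiating the universal statement at $a = a_{i}$, $a' = a_{j}$ gives $r_{O}(a_{i}) = r_{O}(a_{j})$. For ($\Rightarrow$), given $a, a' \in \sAgent$, since the sequence enumerates $\sAgent$ there are indices $i, j$ with $a_{i} = a$ and $a_{j} = a'$, and the index-wise condition yields $r_{O}(a) = r_{O}(a')$. Chaining these equivalences gives $\fFair_{eqa}(O) = 1 \Longleftrightarrow \fFair_{eqa'}(O) = 1$, and since $O$ was arbitrary the proposition follows.

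The main obstacle is not any computation but making the enumeration step precise: the argument rests entirely on \tilesfun{all-agent} producing a sequence whose underlying set is exactly $\sAgent$ with no repeats, so that quantification over sequence positions can be traded for quantification over elements of $\sAgent$. I would therefore spell out this property explicitly as part of the definitions of $\fOrderedSeq{\cdot}$ and \tilesfun{all-agent}, and also remark that the degenerate case $|\sAgent| = 1$ needs no special treatment, since there both sides evaluate to $1$ (the universal statement holds by reflexivity, and a one-element sequence is trivially ``all equal'').
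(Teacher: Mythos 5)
Your proof is correct and follows essentially the same route as the paper's: unfold the pipeline into the sequence $(r_{O}(a))_{a \in \fOrderedSeq{\sAgent}}$, use the fact that \tilesfun{all-agent} enumerates exactly the elements of $\sAgent$, and trade quantification over sequence positions for quantification over agents. Your added remarks on the repetition-free enumeration property and the singleton case are harmless refinements of the same argument.
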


    \begin{proof}
        Applying the expansion of $\fFair_{eqa} (O)$ in (\ref{eq:equality-pipeline}) yields:
        \begin{equation*}
        [\forall m, m' \in (r_{O}(a))_{a \in \fOrderedSeq{\sAgent}} \ (m = m') \ ]
            .
        \end{equation*}

        Notice that $\sAgent$ is non-empty and finite, and that $a \in \sAgent$ if and only if $a \in \fOrderedSeq{\sAgent}$.
        In particular, the sequence $(r_{O}(a))_{a \in \fOrderedSeq{\sAgent}}$ contains, for each agent $a$, the value $r_{O}(a)$.
        In other words, for each value $m$, $m \in (r_{O}(a))_{a \in \fOrderedSeq{\sAgent}}$ if and only if $m \in \{ r_{O}(a) \mid a \in \sAgent \}$.

        Then, $\fFair_{eqa'} (O)$ holds if and only if (by definition)
        \[
            \forall m, m' \in (r_{O}(a))_{a \in \fOrderedSeq{\sAgent}} \ (m = m') \ .
        \]
        In turn, $\fFair_{eqa'} (O)$ holds if and only if
        \begin{equation}
            \forall m, m' \in \{ r_{O}(a) \mid a \in \sAgent \} \ (m = m') \ .
            \label{eq:equality-of-values}
        \end{equation}
        Since $m \in \{ r_{O}(a) \mid a \in \sAgent \}$ if and only if there is an $a \in \sAgent$, such that $m = r_{O}(a)$, we can re-write (\ref{eq:equality-of-values}) as
        \[
            \forall a, a' \in \sAgent \ (r_{O}(a) = r_{O}(a')),
        \]
        which is the definition of $\fFair_{eqa} (O)$.
    \end{proof}
    We define equity with the following pipeline

    \begin{equation}
        \fFair_{eqi} (O) = \tilesfun{all-at-least}(\tilesfun{accumulates}(\tilesfun{all-agent}), \tilesfun{needs}(\tilesfun{all-agent}))
        \label{eq:equity-pipeline}
    \end{equation}

    Observe that $\fFair_{eqi}$ in Definition~\ref{def:equality-equity} is equivalent to $\fFair_{eqi}$ given in $(\ref{eq:equity-pipeline})$.

    \begin{proposition}[Correctness of the Equity Pipeline]
        Let $\tFairScen = \langle \sAgent, \sResource, \sAgentAttribute, \sResourceAttribute \rangle$ be a fairness scenario, $\fFair_{eqi}$ defined as in Definition~\ref{def:equality-equity}, $\fFair_{eqi'}$ defined as in $(\ref{eq:equity-pipeline})$. Then, for every outcome $O$,
        \[
            \fFair_{eqi} (O) = 1 \Longleftrightarrow \fFair_{eqi'} (O) = 1 \ .
        \]
    \end{proposition}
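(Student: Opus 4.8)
The plan is to mirror the proof of the Correctness of the Equality Pipeline: unfold the right-hand side $\fFair_{eqi'}(O)$ according to the semantics of the blocks, and then collapse the resulting sequence-indexed quantifier into a quantifier over $\sAgent$. First I would unfold the pipeline: \tilesfun{all-agent} produces the sorted sequence $\fOrderedSeq{\sAgent}$; \tilesfun{accumulates}(\tilesfun{all-agent}) produces $(r_{O}(a))_{a \in \fOrderedSeq{\sAgent}}$; \tilesfun{needs}(\tilesfun{all-agent}) produces $(q(a))_{a \in \fOrderedSeq{\sAgent}}$; and \tilesfun{all-at-least}, applied to these two sequences, returns $1$ exactly when, zipping the two sequences positionally, every pair $\langle r_{O}(a), q(a) \rangle$ satisfies $r_{O}(a) \geq q(a)$. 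Since both sequences are obtained by applying a function pointwise to one and the same sequence $\fOrderedSeq{\sAgent}$ --- the output of the deterministic block \tilesfun{all-agent} --- the $i$-th entry of the first sequence and the $i$-th entry of the second refer to the same agent, so $\fFair_{eqi'}(O) = 1$ if and only if $r_{O}(a) \geq q(a)$ holds for every $a$ occurring in $\fOrderedSeq{\sAgent}$.

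Next I would invoke the fact already used in the equality case: $\sAgent$ is non-empty and finite, and $a \in \sAgent$ if and only if $a$ occurs in $\fOrderedSeq{\sAgent}$. This allows replacing ``for every $a$ occurring in $\fOrderedSeq{\sAgent}$'' by ``for every $a \in \sAgent$'', yielding $\forall a \in \sAgent \ (r_{O}(a) \geq q(a))$, which is exactly the definition of $\fFair_{eqi}(O)$ in Definition~\ref{def:equality-equity}. Chaining these equivalences gives $\fFair_{eqi}(O) = 1 \Longleftrightarrow \fFair_{eqi'}(O) = 1$.

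The step I expect to be the main obstacle --- though it is bookkeeping rather than genuine mathematics --- is making the pairing precise: one must argue that \tilesfun{all-at-least} compares the two sequences \emph{positionally} and that the two invocations of \tilesfun{all-agent} return the \emph{same} ordering of $\sAgent$, so that position $i$ of the accumulates sequence and position $i$ of the needs sequence denote the same agent. Once this alignment is fixed, the remainder is a routine translation between the sequence-indexed and set-indexed universal quantifiers, just as in the preceding proposition; in particular, the order chosen for $\fOrderedSeq{\sAgent}$ is immaterial because the condition is a conjunction over all agents.
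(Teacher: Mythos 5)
Your proposal is correct and follows essentially the same route as the paper's proof: unfold the pipeline into the two sequences $(r_{O}(a))_{a \in \fOrderedSeq{\sAgent}}$ and $(q(a))_{a \in \fOrderedSeq{\sAgent}}$, observe that \tilesfun{all-at-least} compares them positionally, and then translate the index-wise universal quantifier into $\forall a \in \sAgent\,(r_{O}(a) \geq q(a))$. If anything, you are more explicit than the paper about the one point it leaves implicit --- that both sequences are generated from the same output of \tilesfun{all-agent}, so the $i$-th positions of the two sequences refer to the same agent --- which is exactly the right detail to pin down.
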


    \begin{proof}
        For the case of equity, consider:
        $X = (r_{O}(a))_{a \in \fOrderedSeq{\sAgent}}$ and
        $Y = (q(a))_{a \in \fOrderedSeq{\sAgent}}$.

        Notice that, by construction, $X$ and $Y$ have as many elements as $\sAgent$. Considering that $|X| = |Y|$, $\fFair_{eqi'} (O)$ can be expanded as:
        \[
            \makecell[l] {[ \forall i \in \sNat, 1 \leq i \leq |X| \Rightarrow \ \\
            \ \ \text{ the } i \text{-th element of } (r_{O}(a))_{a \in \fOrderedSeq{\sAgent}} \geq \ \text{ the } i \text{-th element of } (q(a))_{a \in \fOrderedSeq{\sAgent}} ] \ .
            }
        \]
        This is 1 if and only if the following holds:
        $\forall a \in \sAgent \ (r_{O}(a) \geq q(a))$,
        which is how $\fFair_{eqi} (O)$ is defined in Definition~\ref{def:equality-equity}.
    \end{proof}

    The functional notation for $\Tiles$ can be represented using a graphical notation.

    \subsection{Graphical Notation}
    \label{subsec:notation}

    One of the key aspects of \Tiles is the ability to represent configurations in a graphical notation that clarifies how a configuration works. The whole configuration is a formal representation of how the blocks are interconnected. Each block, called \emph{tile}, can connect with others to produce a specific definition.

    \begin{concept}[Tile]
        \label{def:tile}
        A \emph{tile} is a construct that contains a name, a function, an input type, an output type, and contextual information, which includes the fairness scenario, constants, and auxiliary functions. The tile is represented as
        \[
            \tiles{_{\alpha} \ name \ _{\beta}}
        \]
        where $\alpha$ and $\beta$ are type annotations for the input type and the output type respectively, and $\tilesfun{name}$ is the function name. The input type is omitted if the tile represents a constant.
        A type in \Tiles can be
        \begin{itemize}
            \item atomic: \tilestype{a} (agent), \tilestype{r} (resource), \tilestype{m} (quantity or measure), and \tilestype{b} (Boolean);
            \item a tuple composed of other types: $\langle \alpha_{1}, \ldots, \alpha_{n} \rangle$; or
            \item a sequence of a type: $(\alpha)$.
        \end{itemize}

    \end{concept}

    The type annotation in \Tiles can be used not only to specify the connection between the output of a tile and the input of another type, but also to denote an input variable name when the tile has parameters.
    Parametric tiles are also useful for defining customized tiles.

    \subsection{Implementation}

    The \Tiles framework is implemented as an open-source project\footnote{\url{https://github.com/julianmendez/tiles}} written in the \Soda language~\cite{Mendez-2023-Soda}, an open-source functional language\footnote{\url{https://github.com/julianmendez/soda}}. The code in \Soda can be formally proved using the Lean~\cite{Lean-2013} proof assistant, and seamlessly integrated with the Java Virtual Machine ecosystem, allowing efficient execution~\cite{Mendez.Kampik-2025-EUMAS}.
    The \Tiles implementation written in \Soda aims to follow the notation used for the pipelines as accurately as possible, where each tile of the implementation has its source code directly accessible.

    The framework includes detailed components that ensure the correct construction of pipelines. These components are particularly focused on zipping and unzipping sequences, as well as creating and projecting tuples. Since \Soda is statically typed and \Tiles is a typed framework, the type consistency of the entire pipeline can be verified at compile time.

    \section{Discussion}
    \label{sec:discussion}

    This section informally discusses the capabilities and limitations of the \AR metamodel and the \Tiles framework. The framework is designed around the concept of \emph{flow}, where data moves through a pipeline. This pipeline connects tiles, forming a directed acyclic graph with a single start and end point. We do not provide an effective algorithm for constructing pipelines from the first-order formula given by the fairness measure, since it is not possible in the general case.

    \subsection{Expressiveness}

    The \AR metamodel can model a wide range of fairness scenarios, though not all possible ones. We specifically focus on scenarios that involve attributes of agents and resources. In particular, the metamodel supports the combination of multiple agents and resources, as demonstrated in Example~\ref{example:preferences}, where an agent or resource serves as the first parameter of a function that returns a sequence.

    \subsection{Decidability and Time Complexity}
    Pipelines built within the \Tiles framework are decidable provided that their tiles are decidable. When contracts between tiles are respected, undefined values cannot arise.

    With respect to time complexity, the pipeline structure guarantees the absence of loops in the diagram, ensuring that no tile is evaluated more than once. The worst-case complexity of a pipeline is determined by the maximum complexity of its individual tiles.

    \subsection{Applicability}

    The \Tiles framework has broad applicability beyond the fairness domain. As a software engineering tool, it can handle any scenario that involves finite, iterable sets of identifiers and attributes. The type system in \Tiles is flexible and includes sets of identifiers, quantities, Boolean values, tuples, and sequences.

    That said, the \Tiles framework is not intended to be applied at all levels of a scenario. The framework describes connections and processes to provide a better explanation of complex formulas. However, if the notation of the formulas is clear enough, they should be used instead of the pipelines.

    The \Tiles framework is primarily intended for modeling, but the generated pipelines are naturally simple to execute in parallel. This is because pipelines are usually designed to process several agents, resources, or quantities at the same time.
    However, this requires expertise in modeling these pipelines, since parallel execution can be effortful to design.

    \subsection{Limitations}

    A clear limitation of our approach is the generic nature of the metamodel. To address this, we introduce \Tiles as a layer of modular building blocks. Although \Tiles is conceptually elegant, it may pose, similar to other declarative notations, challenges in terms of readability.
    Studying (and potentially improving) the readability of \Tiles can therefore be considered important future work.
    For example, one could instantiate fairness measures and scenarios in several languages and systematically compare understandability, or one could conduct perceived usefulness studies analogous to~\cite{DBLP:journals/sosym/Jalali23} (which empirically compares several declarative modeling languages in this regard).

    Furthermore, this work does not address modeling \emph{specific} real-world problems: we either use \emph{toy examples} that facilitate a better understanding, or work with generally applicable measures, e.g., to showcase broad applicability of the conceptual level, or to make fundamental observations about their relationship (as in the case of strict equity vs. equality).
    Future work could evaluate our metamodel and its operationalization in specific case studies, in which real-world users are faced with fairness modeling and analysis challenges.

    \section{Conclusion}
    \label{sec:conclusion}

    In this paper, we have introduced a metamodel of fairness and demonstrated its application through concrete examples. The metamodel offers a structured approach to understanding and evaluating fairness in various scenarios. It begins by identifying agents, resources, and their relevant attributes, and concludes by the construction of complex fairness measures.

    We have provided examples of an application of the metamodel of central fairness concepts, such as equality, equity, group fairness, and individual fairness. We have explored the use of the metamodel spanning from economics and game theory, like preferences and envy-freeness, to a concept vastly applied in computer networks, like Jain's fairness index.

    In addition to the multiple examples, we have provided a method for operationalizing the metamodel, which simplifies the formal notation.
    The graphical and conceptual representation is supported by a modular design. This formal and practical notation for defining functions can also be applied across various domains.
    Moreover, we provide an implementation of this representation.

    Looking ahead, the fairness framework can be extended and revised when exploring its use in real-world scenarios.
    In this context, it is essential to provide further tooling that simplifies practical modeling.

    \subsubsection*{Acknowledgements}
    This work was partially supported by the Wallenberg AI, Autonomous Systems and Software Program (WASP) funded by the Knut and Alice Wallenberg Foundation.

%% Bibliography
%% Make sure to use the bibliographystyle LNGAI.
    \bibliographystyle{LNGAI}
    \bibliography{main}

\end{document}